\newtheorem{prop}{Proposition}
\begin{document}

\preprint{APS/123-QED}

\title{Standard quantum limit of finite-size optical lattice clock in estimating \\gravitational potential}

\author{Fumiya Nishimura}
 \email{fumiya.nishimura562@s.kyushu-u.ac.jp}
 \affiliation{Department of Physics, Kyushu University, 744 Motooka, Nishi-Ku, Fukuoka 819-0395, Japan}
\author{Yui Kuramochi}
 \email{kuramochi.yui@phys.kyushu-u.ac.jp}
 \affiliation{Department of Physics, Kyushu University, 744 Motooka, Nishi-Ku, Fukuoka 819-0395, Japan}
\author{Kazuhiro Yamamoto}
 \email{yamamoto@phys.kyushu-u.ac.jp}
 \affiliation{Department of Physics, Kyushu University, 744 Motooka, Nishi-Ku, Fukuoka 819-0395, Japan}
 \affiliation{Research Center for Advanced Particle Physics, KyushuUniversity, 744 Motooka, Nishi-ku, Fukuoka 819-0395, Japan}
 \affiliation{International Center for Quantum-Field Measurement Systems for Studies of the Universe and Particles (QUP), KEK, Oho 1-1, Tsukuba, Ibaraki 305-0801, Japan}
 




\begin{abstract}
  We evaluated the accuracy limit for estimating gravitational potential using optical lattice clocks by utilizing the quantum Cram\'{e}r--Rao bound.
  We then compared the results for single-layer and multilayer optical lattice clocks. 
  The results indicate that the lower bound of variance of the estimator of gravitational potential using finite-size optical lattice clocks diverges and recovers repeatedly as a function of time. 
  Namely, the accuracy of the gravitational potential estimation is not a monotonic function of time owing to the effect of gravitational dephasing in finite-size optical lattice clock. 
  Further,  this effect creates an estimation accuracy limit when attempting to avoid the divergence of the lower bound.
  When the number of layers in the optical lattice clock is sufficiently large, the limit is independent of the optical lattice clock details.
  The time required to reach this limit is calculated to be approximately 33 hours for a three-dimensional optical lattice clock consisting of one million cadmium atoms due to Earth's gravity, and approximately the same for other atoms.
\end{abstract}

\maketitle


\section{Introduction}
Recently, the accuracy of time measurements using optical lattice clocks has significantly improved.
A pertinent example is the optical lattice clock used in the experiment at the Tokyo Skytree.
This optical lattice clock is portable and accurate on the order of $10^{-18}$~\cite{Takamoto_2020}.
This is on the order of $\SI{e-2}{\meter}$ in terms of the height difference near the Earth's surface, and the clock can detect height differences of several centimeters.
This renders the use of optical lattice clocks as high-accuracy gravitational potential meters in geodetic applications feasible.
In addition to pure geodetic applications, applications in seismology and volcanology for monitoring crustal deformations have also been considered~\cite{Tanaka_2021}.
The accuracy of optical lattice clocks will be further improved in the future.
For example, optical lattice clocks accurate on the orders of $10^{-19}$ and $10^{-20}$ were demonstrated by Zheng et al.~\cite{Zheng_2022} and Bothwell et al.~\cite{Bothwell_2022}, respectively.
Pedrozo-P\~{e}nafiel et al.~\cite{Pedrozo_2020} reported that an accuracy beyond the standard quantum limit (SQL) can be achieved upon using spin-squeezed states.

By contrast, gravitational dephasing has been shown to reduce the accuracy of optical lattice clocks~\cite{Kawasaki_2022}.
In a vertically layered optical lattice clock, the gravitational potentials of the atomic clocks in each layer are different, resulting in a phase difference due to the gravitational redshift and loss of coherence in the system as a whole.
This is referred to as gravitational dephasing, as described by Kawasaki~\cite{Kawasaki_2022}.
This should be discriminated from the gravitational decoherence~\cite{Pikovski_2015} or the loss of coherence due to the gravitational interaction between atomic clocks~\cite{Esteban_2017}.
When measuring the gravitational potential using an optical lattice clock, the accuracy of the measurement is expected to deteriorate because of gravitational dephasing.

The aforedescribed prior findings indicate that the effect of gravitational dephasing may not be negligible when the accuracy of optical lattice clocks is further improved or when new application methods are devised and require further accuracy.
This suggests that evaluating the gravitational effect on the accuracy would be worthwhile.
In this study, we evaluated the quantum Fisher information of the statistical model of an optical lattice clock in a gravitational field. 
According to the quantum Cram\'{e}r--Rao bound, the inverse of the quantum Fisher information gives a lower bound of the accuracy of any unbiased estimator of the gravitational potential.
The effect of gravitational dephasing can be evaluated by specifically determining the quantum Cram\'{e}r--Rao bound for estimating the gravitational potential using optical lattice clocks.

The remainder of this paper is organized as follows.
In Sec.~\ref{sec:Hamiltonian}, we present the formulation of the Hamiltonian of atomic clocks.
In Sec.~\ref{sec:OneClock}, we detail the calculation of the quantum Fisher information for the statistical model of gravitational potential estimation using a single-layer optical lattice clock.
Sec.~\ref{sec:NClocks} describes the evaluation of the Fisher information of gravitational potential estimation using an $N_{\mathrm{layer}}$-layer optical lattice clock.
In Sec.~\ref{sec:Discussion}, we discuss the results obtained and compare them with the actual values.
Sec.~\ref{sec:Summary} summarizes this study and the conclusions drawn.
In Appendix A, the classical and quantum estimation theories are briefly reviewed.
In Appendix B, an example of a positive operator-valued measure (POVM) that achieves the equality in Eq.~\eqref{eq:CRBOneClock} is presented.

\section{Time evolution of an atomic clock under gravitational field}\label{sec:Hamiltonian}
This section details the derivation of the Hamiltonian of an atomic clock in a weak gravitational field.
Our starting point is the following Hamiltonian of a composite system (a two-level atom in this study) with an internal degree of freedom moving with nonrelativistic velocity and sufficiently small acceleration in a weak gravitational field~\cite{Cepollaro_2023}:
\begin{equation}
  H= mc^2+\frac{p^2}{2m}+mV_N(x)+E_{\mathrm{int}}\left(1+\frac{V_N(x)}{c^2}-\frac{p^2}{2m^2c^2}\right).
  \label{eq:ClockFreeHamiltonian}
\end{equation}
Here, $m$ is the rest mass of the system without considering its internal energy, $x$ is the distance between the center of mass of the system and the gravitational source, $p$ is the momentum of the center of mass of the system, $V_N(x)$ is the Newtonian potential, and $E_{\mathrm{int}}$ is the internal energy of the system. 
We quantize the internal energy $E_{\mathrm{int}}$ to the internal Hamiltonian operator $\hat{H}_{\mathrm{int}}$ and assume that $\hat{H}_{\mathrm{int}}$ can be written in the form of the following two-level system:
\begin{equation}
  \hat{H}_{\mathrm{int}}=E_0\ketbra{0}{0}+E_1\ketbra{1}{1} \quad (E_0 < E_1).
  \label{eq:ClockInternalHamiltonian} 
\end{equation}
The first three terms on the RHS in Eq.~\eqref{eq:ClockFreeHamiltonian} are the rest energy, kinetic energy of the center of mass, and Newtonian potential, respectively:
This part of the Hamiltonian describes the center-of-mass system of an atom.
The fourth term is the Hamiltonian of the atomic internal degrees of freedom.

We have the relativistic correction terms ${V_N(x)}/{c^2}-{p^2}/({2m^2c^2})$ in the fourth term. 
The first term ${V_N(x)}/{c^2}$ describes the time delay due to the gravitational redshift, whereas the second term $-{p^2}/({2m^2c^2})$ originates from the time delay due to a special relativistic effect. 
The reason for this correction is that Eq.~\eqref{eq:ClockInternalHamiltonian} is only an internal Hamiltonian as observed by an observer who is stationary with respect to the atom.
Under general circumstances, the atom is not necessarily stationary from the observer's coordinate system; therefore, the time delay due to special and general relativistic effects is not necessarily $0$.

We now set $p = 0$.
This is justified because the center of mass of an atom in the optical lattice clock is confined to an extremely narrow region by, for example, an outer potential.
Then, by using Eq.~\eqref{eq:ClockInternalHamiltonian}, we obtain
\begin{equation}
  \hat{H} = mc^2+mV_N(x)+\left(\bar{E}\hat{I}+\frac{\Delta E}{2}\hat{\sigma}^z\right)\left(1+\frac{V_N(x)}{c^2}\right),
  \label{eq:ClockFixHamiltonian}
\end{equation}
where
\begin{equation}
  \begin{aligned}
    \bar{E}&=\frac{E_0+E_1}{2}, \\
    \Delta E&=E_0-E_1 ,
  \end{aligned}
\end{equation}
\begin{equation}
  \begin{aligned}
    \hat{\sigma}^z=\ketbra{0}{0}-\ketbra{1}{1},
  \end{aligned}
\end{equation}
and $\hat{I} = \ketbra{0}{0} + \ketbra{1}{1}$ denote an identity operator.
By discarding the constant terms that do not contribute to the time evolution of the system, we obtain
\begin{equation}
  \hat{H} = \frac{\Delta E}{2}\hat{\sigma}^z\left(1+\frac{V_N(x)}{c^2}\right)= \frac{\Delta E}{2} \theta_0 \hat{\sigma}^z ,
  \label{eq:ClockFixHamiltonianNonConst}
\end{equation}
where we define
\begin{align}
  \theta_0 \equiv 1+\frac{V_N(x)}{c^2}.
  \label{eq:DefTheta}
\end{align}
It must be noted that the aim is to estimate the {\it classical} gravitational potential $V_N(x)$.

\section{Accuracy of Gravitational Potential Estimation for Single-Layer Optical Lattice Clock}\label{sec:OneClock}
In this section, we describe the evaluation of the symmetric logarithmic derivative (SLD) Fisher information of the single-layer optical lattice clock shown in Fig.~\ref{fig:Model1Setting}.
The formulation of the quantum Fisher information and quantum Cram\'{e}r--Rao bounds are detailed in Appendix~\ref{app:FormQFIandQCRB}.
\begin{figure}[t]
  \centering
  \includegraphics[width=14cm]{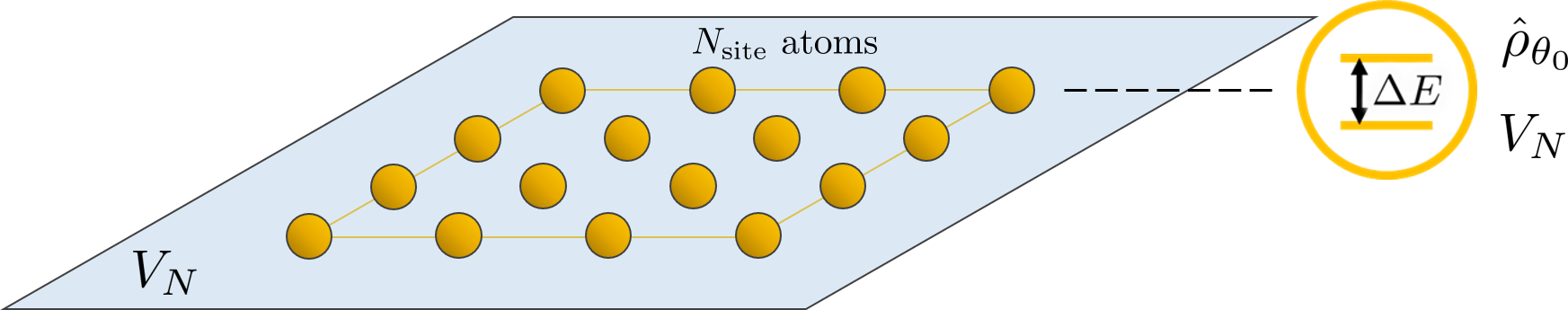}
  \vspace{1cm}
  \caption{Diagram of the system considered in Sec.~\ref{sec:OneClock}. $N_{\mathrm{site}}$ atomic clocks are captured on the plane of gravitational potential $V_N$ (blue plane).}
  \label{fig:Model1Setting}
\end{figure}
If the initial state of the atomic clock is $ \ket{\psi_0} \equiv \left(\ket{0}+\ket{1}\right)/{\sqrt{2}}$, the density operator $\hat{\rho}_{\theta_0}$ of the atomic clock after time $\tau$ is
\begin{align}
  \hat{\rho}_{\theta_0} &= e^{-i\hat{H}\tau/\hbar}\ketbra{\psi_0}{\psi_0} e^{i\hat{H}\tau/\hbar}  \notag\\
                        &=\frac{1}{2}\left\lbrace\mathbb{I} +\exp\left(-i\frac{\Delta E \theta_0}{\hbar}\tau\right)\ketbra{0}{1}+\exp\left(i\frac{\Delta E \theta_0}{\hbar}\tau\right)\ketbra{1}{0}\right\rbrace.
  \label{eq:ClockDensityOperator}
\end{align}
The SLD $\hat{L}_{\theta_0}$ is defined as follows:
\begin{equation}
  \pdv{\hat{\rho}_{\theta_0}}{\theta_0}=\frac{1}{2}\left(\hat{\rho}_{\theta_0}\hat{L}_{\theta_0}+\hat{L}_{\theta_0}\hat{\rho}_{\theta_0}\right).
   \label{eq:DefSLD}
\end{equation}
The solution of Eq.~\eqref{eq:DefSLD} is expressed as
\begin{align}
  \hat{L}_{\theta_0}=-i\frac{\Delta E}{\hbar}\tau\left\lbrace\exp\left(-i\frac{\Delta E \theta_0}{\hbar}\tau\right)\ketbra{0}{1}-\exp\left(i\frac{\Delta E \theta_0}{\hbar}\tau\right)\ketbra{1}{0}\right\rbrace.
\end{align}
Thus, the SLD Fisher information is
\begin{align}
  S =\tr[\hat{\rho}_{\theta_0} \hat{L}_{\theta_0}^2]= \left(\frac{\Delta E \tau}{\hbar}\right)^2.
\end{align}
Therefore, if the variance of the estimator $\theta^{\mathrm{est}}_0$ of $\theta_0$ is $\mathrm{Var}\left[\theta_0 ^{\mathrm{est}}\right]$ and $N_{\mathrm{site}}$ is the number of atomic clocks in the same layer, the quantum Cram\'{e}r--Rao bound gives
\begin{align}
  \mathrm{Var}\left[\theta_0 ^{\mathrm{est}}\right] &\geq \frac{1}{N_{\mathrm{site}}S} = \frac{1}{N_{\mathrm{site}}}\left(\frac{\hbar}{\Delta E \tau}\right)^2. \label{eq:CRBOneClockTheta0}
\end{align}
From the definition of $\theta_0$ in Eq.~\eqref{eq:DefTheta}, we obtain
\begin{equation}
  \frac{\mathrm{Var}\left[V_N^{\mathrm{est}}\right]}{c^4}
  \geq \frac{1}{N_{\mathrm{site}}}\left(\frac{\hbar}{\Delta E \tau}\right)^2,
  \label{eq:CRBOneClock}
\end{equation}
where $V_N^{\mathrm{est}}$ denotes the estimator of gravitational potential $V_N$.
This bound indicates that, in the absence of gravitational dephasing, the accuracy of the estimation increases in time.
The lower bound of ${\mathrm{Var}\left[V_N^{\mathrm{est}}\right] N_{\mathrm{site}}}/{c^4}$ in Eq.~\eqref{eq:CRBOneClock} is plotted as a function of ${\Delta E \tau}/{\hbar}$ in Fig.~\ref{fig:VarVModel1}.
This value has the same time dependence as the square of the standard quantum limit
\begin{align}
  \sigma(\tau,\tau_{\mathrm{avg}})=\frac{1}{\omega_0 \tau}\sqrt{\frac{T_\mathrm{C}}{\tau_{\mathrm{avg}}}}\sqrt{\frac{\xi_\mathrm{W}^2}{N_\mathrm{site}}},\quad \xi_\mathrm{W}^2=1
\end{align}
in the study by Pedrozo-P\~{e}nafiel et al.~\cite{Pedrozo_2020}; here, $\tau$ is the interrogation time, $T_\mathrm{C}$ is the clock cycle time, $\tau_{\mathrm{avg}}$ is the averaging time, $N_\mathrm{site}$ is the number of independent samples, $\omega_0=\Delta E/\hbar$ is the angular frequency of the clock transition, and $\xi_\mathrm{W}^2$ is the Wineland parameter. 
\begin{figure}[t]
  \centering
  \includegraphics[width=13cm]{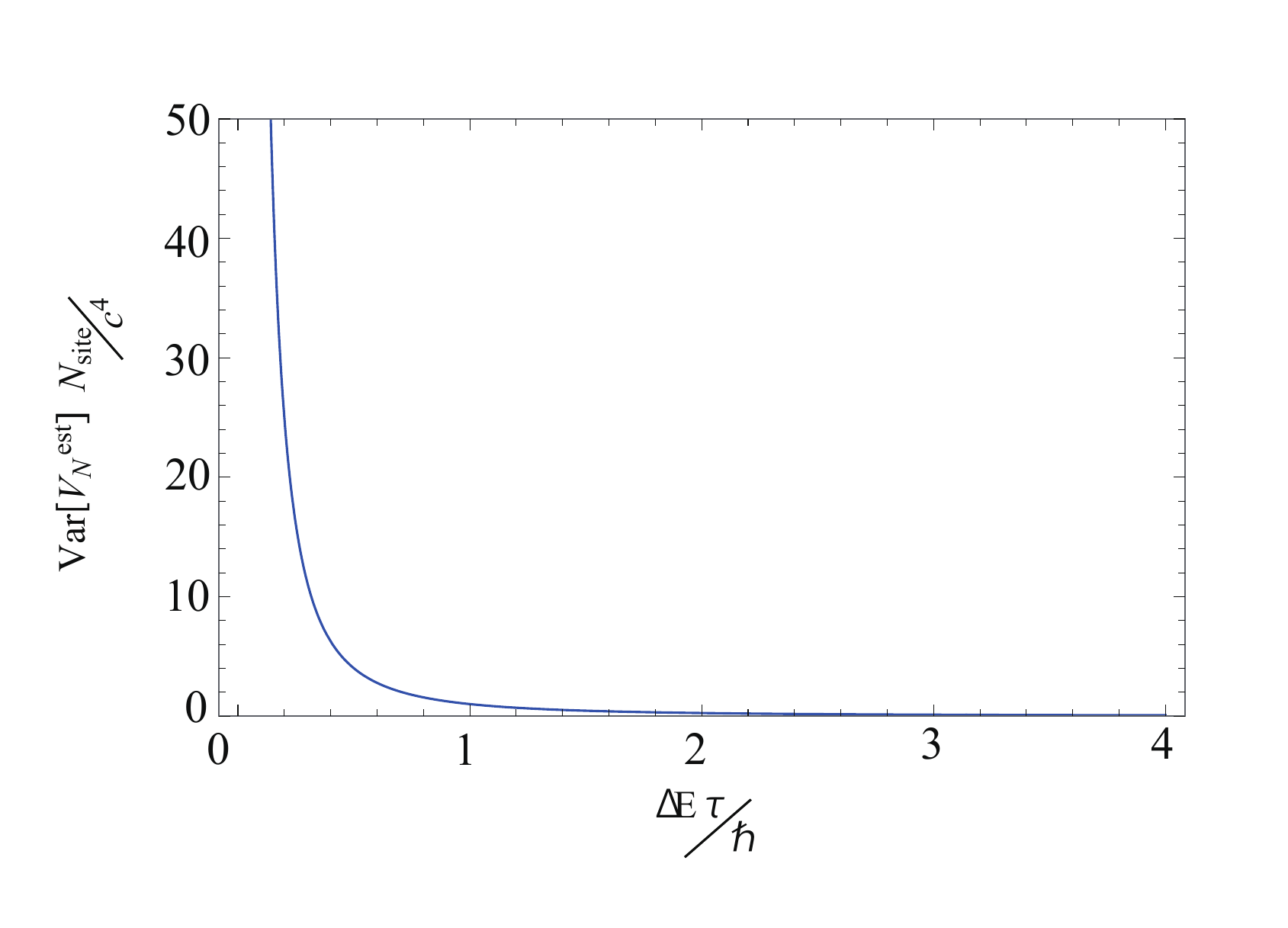}
  \vspace{-0.5cm}
  \caption{Lower bound of ${\mathrm{Var}\left[V_N^{\mathrm{est}}\right]N_{\mathrm{site}}/c^4}$ in Eq.~\eqref{eq:CRBOneClock} plotted as a function of ${\Delta E \tau}/{\hbar}$ in the single-layer scenario.}
  \label{fig:VarVModel1}
\end{figure}

\section{Accuracy of gravitational potential estimation for $N_{\mathrm{layer}}$-layered optical lattice clock}\label{sec:NClocks}
This section presents the calculation of the estimation accuracy of the gravitational potential when an $N_{\mathrm{layer}}$-layer optical lattice clock is used to estimate the gravitational potential.
Specifically, after calculating the SLD Fisher information of the atomic clock, which is indistinguishable from each of the $N_{\mathrm{layer}}$-layers vertically aligned, we apply the quantum Cram\'{e}r--Rao bound, assuming that there are $N_{\mathrm{site}}$ sets, to evaluate the variance of the estimated gravitational potential (see Fig.~\ref{fig:Model2Setting}).
This setup is based on the assumption that the observer can only perform measurements on single atoms and cannot distinguish the atoms being measured. 
The effective density operator of the entire system can then be written as the ensemble average of the density operators of the atomic clocks.
Upon further assuming that each layer has the same population of atoms, the single-atom density operator can be obtained as
\begin{equation}
  \hat{\rho}_{\theta_0}=\frac{1}{2\ell+1}\sum_{j=-\ell}^\ell \hat{\rho}_j,
  \label{eq:ClockDensityOperatorMixedBefCal}
\end{equation}
where $\rho_j$ is the density operator of an atom in the $j$th layer, $j$ is the label of a layers running from $-\ell$ to $\ell$, and $\theta_0 \equiv 1 + V_0 / c^2$.
Here, $N_{\mathrm{layer}}=2\ell+1$ denotes the total number of layers.
The density operator in Eq.~\eqref{eq:ClockDensityOperatorMixedBefCal} can be derived as follows.
Suppose that the observer performs the measurement corresponding to the single-atom POVM $(\hat{E}(x))$ but does not distinguish which atom is measured.
The probability of obtaining outcome $x$ when the state is $\hat{\rho}_j$ is
\begin{equation}
	P(x|j) = \tr[\hat{\rho}_j \hat{E}(x)].
\end{equation}
Because each $\hat{\rho}_j$ is measured with an equal probability $1/N_{\mathrm{layer}},$ the probability of obtaining $x$ is given by
\begin{equation}
	P(x) = \sum_{j=-\ell}^\ell P(x|j) \times \frac{1}{N_{\mathrm{layer}}} = \tr[\hat{\rho}_{\theta_0} \hat{E}(x)].
\end{equation}
Noting that the POVM $\hat{E}$ is arbitrary, we conclude that the system state is effectively described by $\hat{\rho}_{\theta_0}$.

We assume that the gravitational field is uniform, gravitational acceleration is $g$, distance between layers is $h$, and gravitational potential in the $0$th layer is $V_0$.
Then, on the basis of Eq.~\eqref{eq:ClockFixHamiltonianNonConst}, the gravitational potential and the Hamiltonian of the atomic clock in the $j$th layer can be written as
\begin{align}
  V_j=V_0+gjh
\end{align}
and
\begin{align}
  \hat{H}_j=\frac{\Delta E}{2}\left(1+\frac{V_0}{c^2}+\frac{gjh}{c^2}\right)\hat{\sigma}^z = \frac{\Delta E}{2}\left(\theta_0+j\alpha\right)\hat{\sigma}^z,
\end{align}
where $\alpha \equiv \frac{gh}{c^2}$.
If the initial state of each atomic clock is selected to be the superposition state $\frac{1}{\sqrt{2}}\left(\ket{0}+\ket{1}\right)$, the density operator $\hat{\rho}_j$ after time $\tau$ is given by
\begin{align}
  \hat{\rho}_j=\frac{1}{2}\left[\mathbb{I} +\exp\left(-\frac{i\Delta E \tau}{\hbar}\left(\theta_0+j\alpha\right)\right)\ketbra{0}{1}+\exp\left(\frac{i\Delta E \tau}{\hbar}\left(\theta_0+j\alpha\right)\right)\ketbra{1}{0}\right].
\end{align}
Therefore, on the basis of Eq.~\eqref{eq:ClockDensityOperatorMixedBefCal}, we have
\begin{align}
  \hat{\rho}_{\theta_0} &= \frac{1}{2}\left[\mathbb{I} +\frac{1}{2\ell+1}\sum_{j=-l}^{l}e^{-iA(\theta_0+j\alpha)}\ketbra{0}{1}+\frac{1}{2\ell+1}\sum_{j=-l}^{l}e^{iA(\theta_0+j\alpha)}\ketbra{1}{0}\right] \notag\\
                        &= \frac{1}{2}\mathbb{I} +\frac{1}{2\ell+1}\frac{e^{-iA\theta_0}}{2}\frac{\sin\left[\frac{A\alpha}{2}(2\ell+1)\right]}{\sin\left[\frac{A\alpha}{2}\right]}\ketbra{0}{1}+\frac{1}{2\ell+1}\frac{e^{iA\theta_0}}{2}\frac{\sin\left[\frac{A\alpha}{2}(2\ell+1)\right]}{\sin\left[\frac{A\alpha}{2}\right]}\ketbra{1}{0},
  \label{eq:ClockDensityOperatorMixed}
\end{align}
where $A\equiv\frac{\Delta E \tau}{\hbar}$.
Then, Eq.~\eqref{eq:ClockDensityOperatorMixed} can be diagonalized as 
\begin{align}
  \hat{\rho}_{\theta_0}= \frac{1}{2}\left[\left(1+\frac{1}{2\ell+1}\frac{\sin\left[\frac{A\alpha}{2}(2\ell+1)\right]}{\sin\left[\frac{A\alpha}{2}\right]}\right)\ketbra{\psi_+}{\psi_+}+\left(1-\frac{1}{2\ell+1}\frac{\sin\left[\frac{A\alpha}{2}(2\ell+1)\right]}{\sin\left[\frac{A\alpha}{2}\right]}\right)\ketbra{\psi_-}{\psi_-}\right], \label{eq:ClockDensityOperatorMixedDiag} 
\end{align}
\begin{align}
  \ket{\psi_\pm}   \equiv \frac{1}{\sqrt{2}}\left(\ket{0}\pm e^{iA\theta_0}\ket{1}\right). \label{eq:DefPsiPM}
\end{align}
\begin{figure}[t]
  \centering
  \includegraphics[width=15cm]{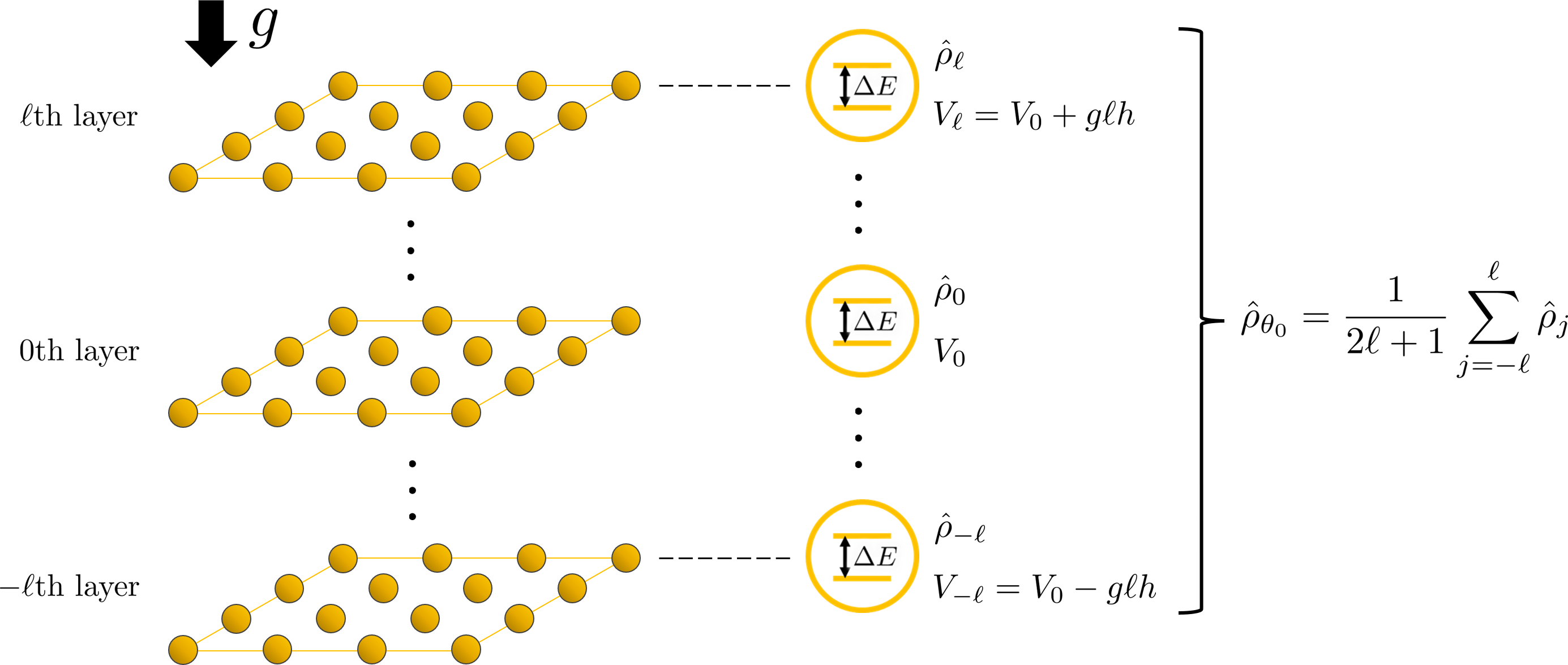}
  \vspace{0.5cm}
  \caption{Diagram of the system considered in Sec.~\ref{sec:NClocks}. The $N_{\mathrm{site}}$ atoms are captured on each layer and the layers are equally spaced in the vertical direction from the $-\ell$th to $\ell$th layer. We assume a uniform gravitational field, and the gravitational acceleration is $g$ and the layer spacing is $h$. To consider the situation wherein the layers are indistinguishable, the density operator $\hat{\rho}_{\theta_0}$ for the entire system is the ensemble average of the density operators for each atomic clock.}
  \label{fig:Model2Setting}
\end{figure}

Next, we evaluate the SLD Fisher information with respect to the parameter $\theta_0$.
The SLD in this case is given by 
\begin{equation}
  \hat{L}_{\theta_0}=iA\left(\frac{1}{2\ell+1}\frac{\sin\left[\frac{A\alpha}{2}(2\ell+1)\right]}{\sin\left[\frac{A\alpha}{2}\right]}\right)\left(\ketbra{\psi_+}{\psi_-}-\ketbra{\psi_-}{\psi_+}\right).
\end{equation}
Therefore, the SLD Fisher information is as follows:
\begin{align}
  S =\tr[\hat{\rho}_{\theta_0} \hat{L}_{\theta_0}^2]= \left(\frac{A}{2\ell+1}\frac{\sin\left[\frac{A\alpha}{2}(2\ell+1)\right]}{\sin\left[\frac{A\alpha}{2}\right]}\right)^2.
\end{align}
Thus, if the variance of an estimator $\theta^{\mathrm{est}}$ of $\theta_0$ is expressed as $\mathrm{Var}\left[\theta_0^{\mathrm{est}}\right]$ and the number of atomic clocks in the same layer is $N_{\mathrm{site}}$, the quantum Cram\'{e}r--Rao bound gives
\begin{align}
  \mathrm{Var}\left[\theta_0^{\mathrm{est}}\right] &\geq \frac{1}{N_{\mathrm{site}}S} = \frac{1}{N_{\mathrm{site}}}\left(\frac{2\ell+1}{A}\frac{\sin\left[\frac{A\alpha}{2}\right]}{\sin\left[\frac{A\alpha}{2}(2\ell+1)\right]}\right)^2.
\end{align}
By putting $A\equiv{\Delta E \tau}/{\hbar}$, $\theta_0 \equiv 1+{V_N(x)}/{c^2}$, and $\alpha \equiv {gh}/{c^2}$, we obtain 
\begin{align}
  \frac{\mathrm{Var}\left[V_0^{\mathrm{est}}\right]}{c^4}
  \geq \frac{1}{N_{\mathrm{site}}}\left(\frac{\hbar}{\Delta E \tau}\right)^2\left((2\ell+1)\frac{\sin\left[\frac{\Delta E gh }{2\hbar c^2}\tau\right]}{\sin\left[\frac{\Delta E gh }{2\hbar c^2}(2\ell+1)\tau\right]}\right)^2.
  \label{eq:CRBNClocks}
\end{align}
The lower bound of ${\mathrm{Var}\left[V_0^{\mathrm{est}}\right] N_{\mathrm{site}}}/{c^4}$ in Eq.~\eqref{eq:CRBNClocks} is plotted as a function of ${\Delta E \tau}/{\hbar}$ in Fig.~\ref{fig:VarVModel2}.
\begin{figure}[t]
  \centering
  \includegraphics[width=13cm]{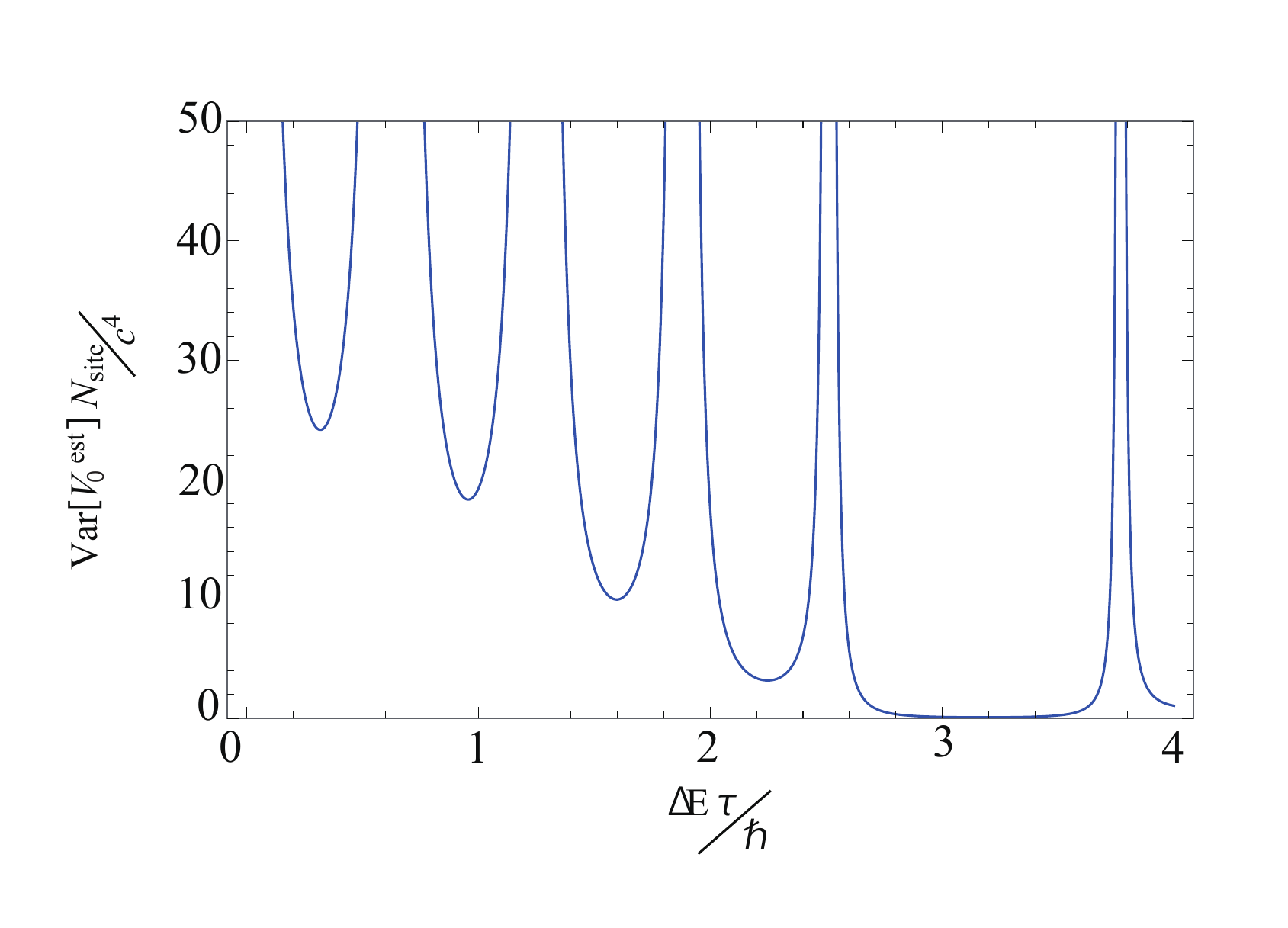}
  \vspace{-0.5cm}
  \caption{Lower bound of ${\mathrm{Var}\left[V_0^{\mathrm{est}}\right]N_{\mathrm{site}}}/{c^4}$ in Eq.~\eqref{eq:CRBNClocks} plotted as a function of ${\Delta E \tau}/{\hbar}$ in the $N_{\mathrm{layer}}$-layer scenario, where we fixed $2\ell+1=5$.}
  \label{fig:VarVModel2}
\end{figure}

\section{Discussion}\label{sec:Discussion}
The implications of the lower bounds calculated as described in Secs.~\ref{sec:OneClock} and \ref{sec:NClocks} are discussed in this section.
First, upon comparing Eqs.~\eqref{eq:CRBOneClock} and \eqref{eq:CRBNClocks}, the effect of gravitational dephasing is indicated by the factor $\left((2\ell+1){\sin\left[\frac{\Delta E gh }{2\hbar c^2}\tau\right]}/{\sin\left[\frac{\Delta E gh }{2\hbar c^2}(2\ell+1)\tau\right]}\right)^2$.
Because of this factor, the lower bound of $\mathrm{Var}\left[V_0^{\mathrm{est}}\right]$, and hence $\mathrm{Var}\left[V_0^{\mathrm{est}}\right]$ itself, diverges at time 
\begin{align}
  \tau_{\mathrm{div}}=\frac{k}{2\ell+1}\frac{2\hbar c^2}{\Delta E gh}\pi
  \label{eq:TimeDiv}
\end{align}
for integer $k$ unless ${k}/({2\ell+1})$ is an integer.
This divergence is due to the progression of dephasing in the entire system, which is caused by an increase in the phase difference between the atomic clocks in each layer.
The density operator of the system at time $\tau_{\mathrm{div}}$ is
\begin{align}
 \hat{\rho}_\theta (\tau_{\mathrm{div}})&=\frac{1}{2}\left(\ketbra{\psi_+}{\psi_+}+\ketbra{\psi_-}{\psi_-}\right) \notag\\
                                      &=\frac{1}{2}\left(\ketbra{0}{0}+\ketbra{1}{1}\right)
\end{align}
and is in a completely mixed state.
The recovery is due to further time evolution after divergence, which produces a layer wherein the phase difference is one lap behind. This resulted in a timing in which the phase coincides with the other layers.

As indicated by Fig.~\ref{fig:VarVModel2}, the lower bound of $\mathrm{Var}\left[V_0^{\mathrm{est}}\right]$ has local minimum points owing to the balance between gravitational dephasing and accuracy improvement over time; the latter improvement is shown in the single-layer scenario in Sec.~\ref{sec:OneClock}.
The point at which $\tau$ is the smallest among these local minimum points provides the principal limit of estimation accuracy when attempting to avoid divergence within a short period.
To find the local minimum in an analytic manner, the following approximation is applied:
When $2\ell+1\gg1$ and $\tau$ are within the range $0 < \tau < \frac{1}{2\ell+1}\frac{2\hbar c^2}{\Delta E gh}\pi$, we may approximate as $\sin\left[{\Delta E gh \tau}/({2\hbar c^2})\right]\backsimeq {\Delta E gh \tau}/({2\hbar c^2})$; hence,
\begin{align}
  \frac{\mathrm{Var} \left[V_0^{\mathrm{est}}\right]}{c^4}
  &\geq  \frac{1}{N_{\mathrm{site}}}\left(\frac{\hbar}{\Delta E \tau}\right)^2\left((2\ell+1)\frac{\sin\left[\frac{\Delta E gh }{2\hbar c^2}\tau\right]}{\sin\left[\frac{\Delta E gh }{2\hbar c^2 }(2\ell+1)\tau\right]}\right)^2 \notag \\
                                              &\simeq \frac{1}{N_{\mathrm{site}}}\left(\frac{gh}{2c^2}\right)^2\left(\frac{2\ell+1}{\sin\left[\frac{\Delta E gh }{2\hbar c^2}(2\ell+1)\tau\right]}\right)^2.
\label{eq:CRBNClocksApp}
\end{align}
Eq.~\eqref{eq:CRBNClocksApp} yields a local minimum value in the range $\tau >0$ at
\begin{align}
  \tau_{\mathrm{min}}=\frac{1}{2\ell+1}\frac{\hbar c^2}{\Delta E gh}\pi,
  \label{eq:TimeMin}
\end{align}
and the lower bound of $ \mathrm{Var}\left[V_0^{\mathrm{est}}\right]/c^4$ at $\tau_{\mathrm{min}}$ is 
\begin{align}
  \frac{\mathrm{Var}\left[V_0^{\mathrm{est}}\right]}{c^4}
  \geq \frac{1}{N_{\mathrm{site}}}\left[\frac{g(2\ell+1)h}{2c^2}\right]^2.
  \label{eq:CRBNClocksAppMin}
\end{align}
Therefore, when the number of layers in the optical lattice clock is sufficiently large, the lower bound of standard deviation in principle, while seeking to avoid divergence, is $\frac{1}{\sqrt{N_{\mathrm{site}}}}\frac{g(2\ell+1)h}{2c^2}$.
This can be interpreted from two perspectives.
First, it must be noted that $(2\ell+1)h$ is the height of the optical lattice clock.
In this case, $g(2\ell+1)h$ is the gravitational potential difference $\Delta V$ between the top and bottom of the optical lattice clock. Eq.~\eqref{eq:CRBNClocksAppMin} implies that
\begin{align}
  \frac{\sigma\left(V_0^{\mathrm{est}}\right)}{c^2}
  \geq \frac{1}{\sqrt{N_{\mathrm{site}}}}\frac{\Delta V}{2c^2},
  \label{eq:CRBNClocksStaDev}
\end{align}
where $\sigma(\cdot)$ denotes standard deviation.
The RHS of Eq.~\eqref{eq:CRBNClocksStaDev} does not depend on the optical lattice clock details.
Second, it must be noted that $2\ell+1$ is the number of layers of the optical lattice clock, $N_{\mathrm{layer}}$.
In this case, the standard deviation is
\begin{align}
  \frac{\sigma\left(V_0^{\mathrm{est}}\right)}{c^2}
  \geq \frac{N_{\mathrm{layer}}}{\sqrt{N_{\mathrm{site}}}}\frac{gh}{2c^2}.
\end{align}
If the atoms are equally spaced horizontally and vertically, ${N_{\mathrm{layer}}}/{\sqrt{N_{\mathrm{site}}}}$ can be regarded as the aspect ratio of the optical lattice clock.
In particular, if the optical lattice clock has a shape with vertical and horizontal symmetry, such as a cube, we have ${N_{\mathrm{layer}}}/{\sqrt{N_{\mathrm{site}}}}=1$; therefore, the standard deviation depends only on the interlayer distance $h$.

The estimation accuracy begins to deteriorate when the interrogation time exceeds the time $\tau_{\mathrm{min}}=\frac{1}{2\ell+1}\frac{\hbar c^2}{\Delta E gh}\pi$.
We now evaluate the specific value of $\tau_{\mathrm{min}}$ assuming a three-dimensional optical lattice clock.
First, we calculate for Cd atoms.
Because the wavelength is used as a clock for the Cd atoms, $\lambda^{\mathrm{Cd}}_{\mathrm{clock}}=\SI{332}{\nano\meter}$~\cite{Yamaguchi_2019}, $\Delta E=\SI{6.0e-19}{\joule}$.
In the optical lattice clock, atoms are captured in the antinode of the standing wave created by the electromagnetic wave of the corresponding magic wavelength; therefore, the spacing between each layer is approximately the magic wavelength of cadmium $\lambda^{\mathrm{Cd}}_{\mathrm{magic}}=\SI{420}{\nano\meter}$~\cite{Yamaguchi_2019} and $h=\SI{4.2e-7}{\meter}$.
Assuming that the number of atoms that can be captured in a cubic optical lattice is 1 million, the number of atoms per side is 100, and $2\ell+1=100$.
In this case, $\tau_{\mathrm{min}}$ is calculated as 
\begin{align}
  \tau_{\mathrm{min}} = \SI{1.2e5}{\second}\left(\frac{1.0\times10^2}{2\ell+1}\right)\left(\frac{\SI{6.0e-19}{\joule}}{\Delta E}\right)\left(\frac{\SI{4.2e-7}{\meter}}{h}\right).
\end{align}
This duration is approximately 33 hours.
Similarly, we calculated $\tau_{\mathrm{min}}$ for other atoms (see TABLE~\ref{tab:TauMin}).
The values are roughly the same for all atoms, including Cd.
\begin{table}[t]
  \centering
  \caption{\label{tab:TauMin}Clock wavelength and magic wavelength of Sr~\cite{Ludlow_2008}, Yb~\cite{Barber_2008}, Cd~\cite{Yamaguchi_2019}, Hg~\cite{McFerran_2012}, and Mg~\cite{Kulosa_2015} atoms. We calculated $\tau_{\mathrm{mim}}$ for each atom on the basis of these values.}
    \begin{ruledtabular}
      \begin{tabular}{lccccc}
        Atom                     &Sr               &Yb               &Cd               &Hg              &Mg\\ 
        \hline
        Clock wavelength(nm)     &698              &578              &332              &266             &458\\
        Magic wavelength(nm)     &813              &759              &420              &363             &468\\
        $\tau_{\mathrm{min}}$(s) &$1.3\times10^{5}$&$1.2\times10^{5}$&$1.2\times10^{5}$&$1.1\times10^{5}$&$1.5\times10^{5}$\\
      \end{tabular}
    \end{ruledtabular}
\end{table}

\section{Summary and Conclusion}\label{sec:Summary}
We evaluated the lower bounds of the variance of the estimators of the gravitational potential using optical lattice clocks based on the quantum Cram\'{e}r--Rao bound.
We then compared the results for the single-layer and multilayer optical lattice clocks. 
The results indicate that the lower bound of variance of the estimator using finite-size optical lattice clocks approaches zero with time in the single-layer scenario, whereas it diverges and recovers repeatedly owing to the effect of gravitational dephasing in the multilayer scenario.
This effect also produces a local minimum point, indicating that there is a limit to the estimation accuracy because of the effect of gravitational dephasing when attempting to avoid the divergence of the lower bound.
When the number of layers in the optical lattice clock is sufficiently large, the standard deviation of the estimate cannot be less than $\frac{N_{\mathrm{layer}}}{\sqrt{N_{\mathrm{site}}}}\frac{gh}{2}$. 
The interrogation time $\tau_{\mathrm{min}}$ required to reach this limit is determined to be $\frac{1}{2\ell+1}\frac{\hbar c^2}{\Delta E gh}\pi$. 
This standard deviation is independent of the optical lattice clock details.
When $N_{\mathrm{layer}}$ and $\sqrt{N_{\mathrm{site}}}$ are comparable, the standard deviation is approximately equal to the distance between the layers of the optical lattice clock.
The time $\tau_{\mathrm{min}}$ is calculated to be approximately 33 hours for a three-dimensional optical lattice clock consisting of one million Cd atoms on the Earth; the values are similar for other atoms.

In conclusion, there is certainly a limit to the lower bound of the variance of the estimators of the gravitational potential (i.e., the accuracy of gravitational potential estimation) using finite-size optical lattice clocks owing to gravitational dephasing.
The lower bound is $\frac{N_{\mathrm{layer}}}{\sqrt{N_{\mathrm{site}}}}\frac{gh}{2}$ in standard deviation.
Although this is applicable to general optical lattice clocks, it is extremely challenging to observe this limit, at least with current technology. 
This is because $\tau_{\mathrm{min}}$ is much larger than the current coherence time ($\SI{26}{\second}$) as previously reported~\cite{Zheng_2022}.

\begin{acknowledgments}
We thank Yasusada Nambu and Akio Kawasaki for crucial discussions and comments on the content of this paper. 
K.Y. was supported by JSPS KAKENHI (Grant No.~JP22H05263, No.~JP23H01175).  
Y.K. was supported by JSPS KAKENHI (Grant No.~JP22K13977).
\end{acknowledgments}

\appendix
\section{Classical and quantum estimation theory}\label{app:FormQFIandQCRB}
In this Appendix, we briefly review the classical and quantum estimation theories used in this study.
For details, see ~\cite{shao2003,PETZ_2011,helstrom1976,holevo2011,Watanabe_2014}.

A (classical) statistical model herein refers to a parameterized family $(P_\theta)_{\theta \in \Theta}$ of probability distributions (probability measures) on a fixed sample space $\Omega $ (equipped with a $\sigma$- algebra $\Sigma$).
For simplicity, we assume that the parameter set $\Theta$ is a 1-dimensional open or closed interval.
We also assume that $P_\theta$ can be written as $dP_\theta  = f_\theta d\mu$ for some $\sigma$-finite measure $\mu$ and non-negative density function $f_\theta $.
Our objective is to estimate the unknown parameter $\theta \in \Theta$ from the sample data $\omega \in \Omega $ generated according to the unknown probability distribution $P_\theta .$
The estimation is described by a (measurable) map $\hat{\theta} \colon \omega \mapsto \hat{\theta} (\omega) \in \Theta$ called an \textit{estimator}.
Occasionally, an estimator $\hat{\theta}$ is required to be \textit{unbiased}, which implies that the expected value $\mathbb{E}_\theta [\hat{\theta}] := \int_{\Omega} \hat{\theta} dP_\theta$ of the estimator coincides with the true value $\theta$ for all $\theta \in \Theta .$
A common quantitative measure of the goodness of an estimator $\hat{\theta}$ is the mean square error (MSE) $\mathbb{E}_\theta [(\hat{\theta} - \theta)^2]$, which coincides with the variance $\mathrm{Var}[\hat{\theta}] = \mathbb{E}_\theta [(\hat{\theta} - \mathbb{E}_\theta[\hat{\theta}])^2]$ when $\hat{\theta}$ is unbiased.

Under some smoothness condition on the density function $f_\theta$, we have the following (classical) Cram\'{e}r--Rao bound (e.g.,~\cite{shao2003}, Theorem~3.3):
\begin{equation}
	\mathrm{Var}[\hat{\theta}] \geq \frac{1}{I(\theta)} ,
	\label{eq:classicalCR}
\end{equation}
where 
\begin{equation}
	I(\theta) := \mathbb{E}_\theta \left[ \left( \partial_\theta \ln f_\theta  \right)^2 \right]
	\label{eq:classicalFI}
\end{equation}
is called the \textit{(classical) Fisher information}.
The classical Cram\'{e}r--Rao bound in Eq.~\eqref{eq:classicalCR} indicates that the inverse of the Fisher information provides a fundamental lower bound for the mean square error irrespective of the choice of the unbiased estimator.
An estimator $\hat{\theta}$ that attains the equality of the Cram\'{e}r--Rao bound is said to be \textit{efficient}.
If the statistical model is the product form $(P_\theta^{\times N})_{\theta \in \Theta}$, which corresponds to the independently and identically distributed (i.i.d.) $N$ samples, the Fisher information becomes $I_N(\theta) = N I(\theta)$ and the Cram\'{e}r--Rao bound gives
\begin{equation}
	\mathrm{Var}[\hat{\theta}] \geq \frac{1}{NI(\theta)} .
\end{equation}
Here, $P_\theta^{\times N}$ denotes the product measure defined on the product $\sigma$-algebra of the Cartesian product $\Omega^N$.

The assumption of unbiasedness or efficiency of the estimator is sometimes too stringent.
Occasionally, a statistical model has no unbiased/efficient estimator.
We will see in Appendix~\ref{app:ExPOVM} that the statistical model presented in Sec.~\ref{sec:OneClock} contains no unbiased estimators.
Nevertheless, we obtain a general result (\cite{shao2003}, Sec.~4.5.2) that an asymptotically efficient estimator exists under certain regularity conditions.

The classical statistical model is generalized to a quantum statistical model~\cite{helstrom1976,holevo2011}, which is a parameterized family $(\rho_\theta)_{\theta \in \Theta}$ of quantum states (density operators) in a fixed quantum system descrived by a Hilbert space $\mathcal{H}$.
For simplicity, we consider the case in which $\Theta$ is an interval and assume the finite dimensionality of $\mathcal{H}$ and smoothness of $\theta \mapsto \rho_\theta$.
As in the classical case, our objective in quantum estimation theory is to infer the parameter $\theta \in \Theta$ from measurement data.
Here appears a new problem that does not exist in the classical setting: because quantum theory prohibits us from directly perceiving unknown quantum states, we must choose an appropriate measurement to infer the parameter, which is formally described by a positive operator-valued measure (POVM).

By a POVM, we refer to a map $\mathsf{M}\colon \Sigma \to \mathcal{L}(\mathcal{H})$ defined on a $\sigma$-algebra $\Sigma$ on a set (sample space) $\Omega$ and taking values in the set $\mathcal{L}(\mathcal{H})$ of bounded operators on $\mathcal{H}$ satisfying (i) $\mathsf{M} (A) \geq 0$ $(\forall A \in \Sigma)$, (ii) $\mathsf{M} (\emptyset)=0$, $\mathsf{M}(\Omega) = \mathbb{I}$ (the identity operator on $\mathcal{H}$), and (iii) $\mathsf{M} (\bigcup_n A_n) = \sum_n \mathsf{M}(A_n)$ for any disjoint sequence $(A_n)$ in $\Sigma$.
For a POVM $\mathsf{M}$ and quantum state $\rho$, the outcome probability measure $P^\mathsf{M}_\rho \colon \Sigma \to \mathbb{R}$ is defined as
\begin{equation}
	P^\mathsf{M}_\rho (A) := \mathrm{tr}[\rho \mathsf{M}(A)] \quad (A \in \Sigma).
\end{equation}
The probability measure $P^\mathsf{M}_\rho$ describes the outcome distribution of the measurement corresponding to $\mathsf{M}$ when the state is prepared in $\rho$.

For a quantum statistical model $(\rho_\theta)_{\theta \in \Theta}$, the POVM $\mathsf{M}\colon \Sigma \to \mathcal{L}(\mathcal{H})$ induces a classical statistical model $(P^\mathsf{M}_{\rho_\theta})_{\theta \in \Theta}$ and the classical Cram\'{e}r--Rao bound in this statistical model gives
\begin{equation}
	\mathrm{Var}[\hat{\theta}] \geq \frac{1}{I_\mathsf{M}(\theta)},
	\label{eq:IM} 
\end{equation}
where $\hat{\theta}$ is an unbiased estimator of $\theta$ and $I_\mathsf{M}(\theta)$ denotes the classical Fisher information of $(P^\mathsf{M}_{\rho_\theta})_{\theta \in \Theta}$ which explicitly depends on the measurement $\mathsf{M}$.

The quantum Cram\'{e}r--Rao bound~\cite{helstrom1976,holevo2011} states that an $\mathsf{M}$-independent quantum version of the Fisher information $I_Q$, or a \textit{quantum Fisher information}, gives an upper bound of the $\mathsf{M}$-dependent classical Fisher information:
\begin{equation}
	I_\mathsf{M}(\theta) \leq I_Q(\theta) .
	\label{eq:QCR1}
\end{equation}
From Eqs.~\eqref{eq:IM} and \eqref{eq:QCR1}, we obtain an $\mathsf{M}$-independent bound of the MSE of the estimator. 
\begin{equation}
	\mathrm{Var}[\hat{\theta}] \geq \frac{1}{I_Q (\theta)}.
\end{equation}
There is an infinite amount of quantum Fisher information that generalizes the classical Fisher information~\cite{petz1996}, which originates from the noncommutativity of quantum theory.
One common choice is the symmetric logarithmic derivative (SLD) Fisher information defined as
\begin{align}
  S=\tr\left[\rho_\theta L_\theta^2 \right],
\end{align}
where $L_\theta$ is an Hermitian operator called the SLD and is defined as the solution of the following equation:
\begin{align}
 \partial_\theta \rho_\theta=\frac{1}{2} \left( \rho_{\theta} L_{\theta}+L_{\theta} \rho_{\theta} \right). 
  \label{eq:SLDdef}
\end{align}
If $\rho_\theta$ has the spectral decomposition
\begin{equation}
	\rho_\theta = \sum_j p_j \ket{j}\bra{j},
\end{equation}
then SLD $L_\theta$ has an explicit expression
\begin{equation}
	L_\theta =\sum_{j,k: p_j + p_k >0} \frac{2 \bra{j} \partial_\theta \rho_\theta \ket{k}}{p_j + p_k} \ket{j}\bra{k}.
\end{equation}
(The matrix element $\bra{j} L_\theta \ket{k}$ with $p_j = p_k = 0$ cannot be determined uniquely from Eq.~\eqref{eq:SLDdef}; however, it can be shown that such arbitrariness does not affect the value of the SLD Fisher information.) 
It is known that the SLD Fisher information is the minimal quantum Fisher information and hence gives the tightest upper bound in the quantum Cram\'{e}r--Rao bound in Eq.~\eqref{eq:QCR1}; therefore, we consider this quantity in the main section.

\section{An Example of POVM that enables efficient estimation of gravitational potential in Sec.~\ref{sec:OneClock}}\label{app:ExPOVM}
In this appendix, we provide an explicit example of a POVM that achieves the equality in Eq.~\eqref{eq:CRBOneClock}: 
We define a POVM $\mathsf{M}$ on the Borel $\sigma$-algebra $\mathcal{B}([0,2\pi))$ of the interval $[0,2\pi )$ as
\begin{align}
	d \mathsf{M}(\phi) :=\ketbra{\psi_\phi}{\psi_\phi}\frac{d\phi}{\pi} \label{eq:EffPOVM} ,\\
	\ket{\psi_\phi}=\frac{1}{\sqrt{2}}(\ket{0}+e^{i\phi}\ket{1}), \label{eq:psith}
\end{align}
where $d\phi$ denotes the Lebesgue measure.
The measurement corresponding to $\mathsf{M}$ is realized as follows. First, we randomly generate $\phi^\prime \in [0,\pi)$ according to the uniform distribution $\frac{1}{\pi} d\phi^\prime$. We then perform the projective measurement $(\ketbra{\psi_{\phi^\prime}}{\psi_{\phi^\prime}}, \ketbra{\psi_{\phi^\prime + \pi}}{\psi_{\phi^\prime+\pi}})$, and record $\phi = \phi^\prime$ (respectively, $\phi = \phi^\prime +\pi$) when the outcome of the projective measurement is $\ket{\psi_{\phi^\prime}}$ (respectively, $\ket{\psi_{\phi^\prime + \pi}}$).

The probability density function $f_{\theta_0}(\phi)$ when the state is prepared in $\hat{\rho}_{\theta_0}$, described in Sec.~\ref{sec:OneClock}, is expressed as
\begin{align}
  f_{\theta_0} (\phi)d\phi&=\tr[\hat{\rho}_{\theta_0}d\mathsf{M}(\phi)] \notag\\
                       &=\frac{d\phi}{2\pi}\left[1+\cos\left(\frac{\Delta E \tau}{\hbar}\theta_0-\phi\right)\right],
\end{align}
or
\begin{align}
  f_{\theta_0}(\phi) = \frac{1}{2\pi}\left[1+\cos\left(\frac{\Delta E \tau}{\hbar}\theta_0-\phi\right)\right].
\end{align}
Thus, the classical Fisher information is evaluated as
\begin{align}
  I(\theta_0)&= \int_0^{2\pi} (\partial_{\theta_0} \ln f_{\theta_0}(\phi))^2 f_{\theta_0}(\phi)d\phi \notag\\
             &=\left(\frac{\Delta E \tau}{\hbar}\right)^2.
\end{align}
This is equal to the lower bounds in Eq.~\eqref{eq:CRBOneClockTheta0}; therefore, there are some efficient estimators when the POVM is used in Eq.~\eqref{eq:EffPOVM}.

The classical statistical model associated with the POVM $\mathsf{M}$ has no unbiased estimator of $\theta_0$ and, hence, of the potential $V_N(x)$.
More generally, we can prove that an unbiased estimator does not exist for \textit{any} choice of measurement POVM as in the following proposition:
\begin{prop} \label{prop:ne}
Let $(\sigma_\theta)_{\theta \in \Theta}$ be a quantum statistical model defined by
\begin{gather}
	\Theta = (\alpha, \beta) \quad (-\infty<\alpha<\beta<\infty), \\
	\sigma_\theta := \ketbra{\psi_\theta}{\psi_\theta},
\end{gather}
where $\ket{\psi_\theta}$ is given by Eq.~\eqref{eq:psith}.
Then, for any POVM $\mathsf{N}$, the associated classical statistical model $(P^\mathsf{N}_{\sigma_\theta})_{\theta \in \Theta}$ has no unbiased estimator for $\theta \in \Theta$.
\end{prop}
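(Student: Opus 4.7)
The plan is to exploit the very simple $\theta$-dependence of $\sigma_\theta$: since $\sigma_\theta$ is an affine function of $\cos\theta$ and $\sin\theta$, every outcome probability, and hence every expectation of a fixed estimator, must also be an affine combination of $1$, $\cos\theta$, $\sin\theta$. Since the identity function $\theta \mapsto \theta$ is not of this form on any open interval, no unbiased estimator can exist.

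First, I would expand
\begin{equation*}
\sigma_\theta = \ketbra{\psi_\theta}{\psi_\theta} = \tfrac{1}{2}\mathbb{I} + \tfrac{\cos\theta}{2}\hat{X} + \tfrac{\sin\theta}{2}\hat{Y},
\end{equation*}
where $\hat{X} := \ketbra{0}{1}+\ketbra{1}{0}$ and $\hat{Y} := i(\ketbra{1}{0}-\ketbra{0}{1})$, so that for any POVM $\mathsf{N}$ on $(\Omega, \Sigma)$ the outcome probability measure splits as
\begin{equation*}
P^{\mathsf{N}}_{\sigma_\theta} = \mu_0 + \cos\theta \cdot \mu_1 + \sin\theta \cdot \mu_2,
\end{equation*}
where $\mu_0(A) := \tfrac{1}{2}\tr[\mathsf{N}(A)]$ is a probability measure on $\Omega$ and $\mu_1(A) := \tfrac{1}{2}\tr[\hat{X}\mathsf{N}(A)]$, $\mu_2(A) := \tfrac{1}{2}\tr[\hat{Y}\mathsf{N}(A)]$ are real signed measures whose total variations are dominated by $\mu_0$ (because $\|\hat{X}\|=\|\hat{Y}\|=1$).

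Supposing an unbiased estimator $\hat{\theta}$ exists, I would then show that the three real numbers $A := \int \hat{\theta}\,d\mu_0$, $B := \int \hat{\theta}\,d\mu_1$, $C := \int \hat{\theta}\,d\mu_2$ are well defined and finite, so that
\begin{equation*}
\theta = \mathbb{E}_\theta[\hat{\theta}] = A + B\cos\theta + C\sin\theta \qquad (\theta \in (\alpha, \beta)).
\end{equation*}
Differentiating twice in $\theta$ gives $0 = -B\cos\theta - C\sin\theta$ on the interval; by linear independence of $\{\cos, \sin\}$ we must have $B = C = 0$, forcing $\theta \equiv A$ on $(\alpha, \beta)$ and contradicting $\alpha < \beta$.

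The only delicate step is the finiteness of $A$, $B$, $C$, since a priori one only knows that $\hat{\theta}$ is $P^{\mathsf{N}}_{\sigma_\theta}$-integrable for each fixed $\theta$. To resolve this, I would pick any three distinct values $\theta_1, \theta_2, \theta_3 \in (\alpha, \beta)$; since three distinct points on the unit circle are noncollinear, the vectors $(1, \cos\theta_i, \sin\theta_i)$ are linearly independent in $\mathbb{R}^3$, so one can solve for real coefficients $a_i$ realizing $\mu_0 = \sum_i a_i P^{\mathsf{N}}_{\sigma_{\theta_i}}$ (and analogous representations of $\mu_1$, $\mu_2$) as signed measures. Then $\int |\hat{\theta}|\,d\mu_0 \leq \sum_i |a_i| \int |\hat{\theta}|\,dP^{\mathsf{N}}_{\sigma_{\theta_i}} < \infty$, and likewise for $\mu_1$ and $\mu_2$, completing the justification.
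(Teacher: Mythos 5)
Your proof is correct and follows essentially the same route as the paper's: you express $\tfrac12\mathbb{I}$ and the two off-diagonal operators (the paper uses the complex basis $\ketbra{1}{0},\ketbra{0}{1}$ and a Vandermonde determinant where you use the real basis $\hat{X},\hat{Y}$ and noncollinearity of three circle points) as linear combinations of three states $\sigma_{\theta_i}$ to secure integrability, and then derive the impossible identity $\theta = A+B\cos\theta+C\sin\theta$ on an interval, exactly as in Eq.~\eqref{eq:thcontra}. The only micro-caveat is that ``any three distinct $\theta_i$'' should read ``three $\theta_i$ with distinct values of $e^{i\theta_i}$'' (relevant only if $\beta-\alpha>2\pi$), which is always arrangeable.
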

The proof is provided in the final part of the Appendix.

As  we may write $\hat{\rho}_{\theta_0} = \sigma_{\frac{\Delta E \tau \theta_0}{\hbar}}$, Proposition~\ref{prop:ne} implies that there is no unbiased estimator for $\theta_0 .$

The classical statistical model $(P^\mathsf{M}_{\hat{\rho}_{\theta_0}})$ contains an unbiased estimator $\hat{T} \colon \phi \mapsto 2 e^{i\phi}$ of $e^{i \frac{\Delta E\tau}{\hbar}\theta_0}$.
In fact, the expectation value of $\hat{T}$ is 
\begin{align}
	\mathbb{E}_{\theta_0}[\hat{T}] &= \int_0^{2\pi} 2 e^{i\phi}f_{\theta_0} (\phi) d\phi \notag\\
	&= \int_0^{2\pi} 2e^{i\phi} \left[1+\cos\left(\frac{\Delta E \tau} {\hbar}\theta_0-\phi\right) \right] \frac{d\phi}{2\pi} \notag\\
	&= e^{i \frac{\Delta E\tau}{\hbar}\theta_0},
\end{align}
which shows the unbiasedness of $\hat{T}$.

\begin{proof}[Proof of Proposition~\ref{prop:ne}]
Let $(\Omega, \Sigma)$ be the outcome sample space (measurable space) of the POVM $\mathsf{N}$ and assume the existence of an unbiased estimator $\hat{\theta} \colon \Omega \to \mathbb{R}$ for the classical statistical model $(P^\mathsf{N}_{\sigma_\theta})_{\theta \in \Theta}$.
Then, the unbiasedness implies that
\begin{equation}
	\int_{\Omega} \hat{\theta} dP^\mathsf{N}_{\sigma_\theta}  = \theta \quad (\forall \theta \in(\alpha, \beta)).
	\label{eq:Nunb}
\end{equation}
We consider three fixed elements $\theta_1, \theta_2,\theta_3\in (\alpha, \beta)$ with $e^{i\theta_1} \ne e^{i\theta_2} \ne e^{i\theta_3} \ne e^{i\theta_1}$.
As $\sigma_\theta$ is written as
\begin{equation}
	\sigma_\theta =\frac{1}{2} [\mathbb{I} + e^{i\theta}\ketbra{1}{0}+ e^{-i\theta}\ketbra{0}{1}], 
	\label{eq:sigmath}
\end{equation}
we obtain the following linear equation:
\begin{equation}
	\begin{pmatrix}
		e^{i\theta_1}\sigma_{\theta_1} \\ e^{i\theta_2}\sigma_{\theta_2} \\ e^{i\theta_3}\sigma_{\theta_3}
	\end{pmatrix}
	=
	\begin{pmatrix}
		1 & e^{i\theta_1} & e^{2i\theta_1} \\
		1 & e^{i\theta_2} & e^{2i\theta_2} \\
		1 & e^{i\theta_3} & e^{2i\theta_3} 
	\end{pmatrix}
	\begin{pmatrix}
		\frac{1}{2} \ketbra{0}{1} \\ \frac{1}{2}\mathbb{I} \\ \frac{1}{2} \ketbra{1}{0}
	\end{pmatrix} .
	\label{eq:lineq}
\end{equation}
As the $3\times 3$ matrix on the RHS of Eq.~\eqref{eq:lineq} is invertible (Vandermonde matrix), we may write 
\begin{align}
	\frac{1}{2}\mathbb{I} &= a_1 \sigma_{\theta_1} + a_2 \sigma_{\theta_2} + a_3 \sigma_{\theta_3} \label{eq:g1} ,\\
	\frac{1}{2} \ketbra{1}{0} &= b_1 \sigma_{\theta_1} + b_2 \sigma_{\theta_2} +b_3 \sigma_{\theta_3}, \label{eq:g2} \\
	\frac{1}{2} \ketbra{0}{1} &= c_1 \sigma_{\theta_1} + c_2 \sigma_{\theta_2} +c_3 \sigma_{\theta_3} \label{eq:g3}
\end{align}
for some scalars $a_1,a_2,a_3,b_1,b_2,b_3,c_1,c_2,$ and $c_3$ that depend only on $\theta_1, \theta_2,$ and $\theta_3$.
Then, using the linearity of $ \mathcal{L}(\mathcal{H}) \ni \rho \mapsto P^\mathsf{N}_\rho $, where $P^\mathsf{N}_\rho$ for a general operator $\rho$ is a complex measure, Eqs.~\eqref{eq:Nunb}, \eqref{eq:g1}, \eqref{eq:g2}, and \eqref{eq:g3} imply that complex measures $P^\mathsf{N}_{\frac{1}{2}\mathbb{I}}$, $ P^\mathsf{N}_{\frac{1}{2}\ketbra{1}{0}}$, $P^\mathsf{N}_{\frac{1}{2}\ketbra{0}{1}}$ are written as linear combinations of the measures $P^{\mathsf{N}}_{\sigma_{\theta_1}}$, $P^{\mathsf{N}}_{\sigma_{\theta_2}}$, and $P^{\mathsf{N}}_{\sigma_{\theta_3}}$.
As the estimator $\hat{\theta}$ is integrable with respect to $P^\mathsf{N}_{\sigma_{\theta_1}}$, $P^\mathsf{N}_{\sigma_{\theta_2}}$, and $P^\mathsf{N}_{\sigma_{\theta_3}}$,
so it is with respect to the complex measures $P^\mathsf{N}_{\frac{1}{2}\mathbb{I}}$, $ P^\mathsf{N}_{\frac{1}{2}\ketbra{1}{0}}$, and $P^\mathsf{N}_{\frac{1}{2}\ketbra{0}{1}}$.
Thus, 
\begin{gather}
	\int_\Omega \hat{\theta} dP^\mathsf{N}_{\frac{1}{2}\mathbb{I}}
	= a_1\theta_1 + a_2\theta_2 + a_3 \theta_3 =: A,\\
	\int_\Omega \hat{\theta} dP^\mathsf{N}_{\frac{1}{2}\ketbra{1}{0}}
	= b_1\theta_1 + b_2\theta_2 + b_3 \theta_3 =: B,  \\
	\int_\Omega \hat{\theta} dP^\mathsf{N}_{\frac{1}{2}\ketbra{0}{1}}
	= c_1\theta_1 + c_2\theta_2 + c_3 \theta_3 =: C
\end{gather}
are well-defined.
Therefore, based on Eqs.~\eqref{eq:Nunb} and \eqref{eq:sigmath}, we obtain
\begin{align}
	\theta &= \int_{\Omega}\hat{\theta} dP^\mathsf{N}_{\frac{1}{2}\mathbb{I} + \frac{e^{i\theta}}{2} \ketbra{1}{0}+ \frac{e^{-i\theta}}{2} \ketbra{0}{1}} \notag\\
	&= A + Be^{i\theta} + Ce^{-i\theta}
	\label{eq:thcontra}
\end{align}
for all $\theta \in (\alpha ,\beta)$.
The RHS of Eq.~\eqref{eq:thcontra} is an entire function of $\theta$ and has the following Taylor expansion:
\begin{equation}
	A+B+C + i\theta (B-C) + \frac{(i\theta)^2}{2!} (B+C) + \frac{(i\theta)^3}{3!} (B-C) +\dotsb.
\end{equation}
By comparing the first- and third-order terms of $\theta$ we obtain
\begin{equation}
	1 = i(B-C) , \quad 0 = B-C ,
\end{equation}
which is a contradiction.
Therefore, $(P^\mathsf{N}_{\sigma_{\theta}})_{\theta \in \Theta}$ has no unbiased estimator for $\theta$.
\end{proof}
\bibliography{apssamp}
\end{document}